\newtheorem{theorem}{Theorem}
\newtheorem{corollary}[theorem]{Corollary}
\newtheorem{lemma}[theorem]{Lemma}
\newtheorem{definition}[theorem]{Definition}
\newtheorem{conjecture}[theorem]{Conjecture}
\algnewcommand\algorithmicforeach{\textbf{for each}}
\begin{document}

\title{Lower Bounds for the Minimum Spanning Tree Cycle Intersection Problem}

\author[1,2]{Manuel Dubinsky\corref{cor1}}
\ead{mdubinsky@undav.edu.ar}

\author[3,4,5]{Kun-Mao Chao\fnref{fn1}}

\author[6]{César Massri}

\author[7]{Gabriel Taubin}

\cortext[cor1]{Corresponding author}
\fntext[fn1]{Kun-Mao Chao was supported in part by the Ministry of Science and Technology, Taiwan, under grant number 111-2221-E-002-132-MY3.}

\address[1]{Ingeniería en Informática, Departamento de Tecnología y Administración, Universidad Nacional de Avellaneda, Argentina}
\address[2]{Departamento de Computación, Facultad de Ciencias Exactas y Naturales, Universidad de Buenos Aires, Argentina}  
\address[3]{Department of Computer Science and Information Engineering, National Taiwan University, Taipei, Taiwan}
\address[4]{Graduate Institute of Biomedical Electronics and Bioinformatics, National Taiwan University, Taipei, Taiwan}
\address[5]{Graduate Institute of Networking and Multimedia, National Taiwan University, Taipei, Taiwan}
\address[6]{Departamento de Matemática, Universidad de CAECE, CABA, Argentina}
\address[7]{School of Engineering, Brown University, Providence, RI, USA}

\begin{abstract}
Minimum spanning trees are important tools in the analysis and design of networks. Many practical applications require their computation, ranging from biology and linguistics to economy and telecommunications. The set of cycles of a network has a vector space structure. Given a spanning tree, the set of non-tree edges defines cycles that determine a basis. The intersection of two such cycles is the number of edges they have in common and the intersection number -denoted $\cap(G)$-, is the number of non-empty pairwise intersections of the cycles of the basis. The Minimum Spanning Tree Cycle Intersection problem consists in finding a spanning tree such that the intersection number is minimum. This problem is relevant in order to integrate discrete differential forms. In this paper, we present two lower bounds of the intersection number of an arbitrary connected graph $G=(V,E)$. In the first part, we prove the following statement:

$$\frac{1}{2}\left(\frac{\nu^2}{n-1} - \nu\right) \leq \cap(G),$$

where $n = |V|$ and $\nu$ is the \emph{cyclomatic number} of $G$. In the second part, based on some experimental results and a new observation, we conjecture the following improved tight lower bound:

$$(n-1) \binom{q}{2} + q \ r\leq \cap(G),$$

where $2 \nu = q (n-1) + r$ is the integer 
division of $2 \nu$ and $n-1$. This is the first result in a general context, that is for an arbitrary connected graph.
\end{abstract}	

\begin{keyword}
Minimum spanning tree; Cycle intersection; Cycle basis; Intersection number
\end{keyword}

\maketitle

\section{Introduction}

Spanning trees play an important role in algorithm design~\cite{wu2004spanning}. The Minimum Spanning Tree Cycle Intersection problem was proposed recently~\cite{Dubinsky2021} as part of a -still under development- mesh deformation algorithm in the context of the field of Digital Geometry Processing~\cite{Botsch:2010}. More specifically, a \emph{mesh} is a discretization of a surface represented as a graph embedded in 3D Euclidean space. A \emph{deformation} is a consistent perturbation of the vertices of the mesh to achieve some visual effect, which is relevant in simulations and 3D animation.

\medskip

The proposed deformation algorithm is based on interactively setting the desired edge lengths globally or locally -in a region of the mesh-, and automatically rearrange vertex positions in the best possible way to 
approximate those lengths. Concretely, the situation can be expressed in these terms: let $G = (V,E)$ be a directed connected graph and $w:  E \rightarrow \mathbb{R}$ be an edge function. We call $w$ a \emph{discrete 1-form} on $G$. Integrating $w$ is the problem of finding a vertex function $x: V \rightarrow \mathbb{R}$ such that for every edge $e_{vw}=(v \rightarrow w)$, the differential $dx:  E \rightarrow \mathbb{R}$, defined as: $dx(e_{vw}) := x(w) - x(v)$, approximates $w$ minimizing the error:

$$\lambda(x) = \sum_{e_{vw} \in  E} \left[ dx(e_{vw}) - w(e_{vw}) \right] ^2.$$

The directed incidence matrix $D \in \{0,1,-1\}^{|E| \times |V|}$ of $G$ models $dx$ in the sense that each row contains the information of an edge $e_{v,w}$: -1 in the column of $v$, 1 in the column of $w$ and 0 otherwise. $D$ has rank $n-1$, and its kernel is generated by $\bm 1$ -the vector of all 1's. This degree of freedom can be explained by the fact that each translation of the function $x$ (i.e.: $x + k$) has the same differential.

\medskip

The set of cycles of a graph has a vector space structure over $\mathbb{Z}_2$, in the case of undirected graphs, and over $\mathbb{Q}$, in the case of directed graphs \cite{Kavitha:2009}. A basis of such a vector space is denoted \emph{cycle basis} and its dimension is the \emph{cyclomatic number} $\nu = |E| - |V| + |CC|$ where $E$, $V$ and $CC$ are the set of edges, vertices and connected components of the graph, resp. Given a cycle basis $B$ we can define its \emph{cycle matrix} $\Gamma \in K^{|E| \times \nu}$ -where $K$ is the scalar field (i.e.: $\mathbb{Z}_2$ or $\mathbb{Q}$)-, as the matrix that has the cycles of $B$ as columns. 

\medskip

Let $G$ be a graph and $T$ a spanning tree of $G$. For an edge $e$ in $G-T$, there is a cycle in $T \cup {e}$. We call those edges \emph{cycle-edges} and those cycles \emph{tree-cycles}. The set of tree-cycles of $T$ defines a cycle basis of $G$.

\medskip

The initial version of the mesh deformation algorithm consisted in choosing a spanning tree $T$ of $G$ and solving the following  linear system:

$$
M
\begin{bmatrix}
	\bf \hat x\\
	\bf y
\end{bmatrix}
=
\bf w,
$$

\noindent where $\bf \hat x \in \mathbb{R}^{\big|V \big|-1}$ is the solution of the integration problem -fixing the value of and arbitrary vertex, in order to eliminate the degree of freedom introduced by $dx$-, $\bf y \in \mathbb{R}^{\nu}$ measures an error associated to each tree-cycle, $\bf w\in \mathbb{R}^{\big|E\big|}$ is the evaluation of $w$ for every edge, and $M$ is a matrix of this form: 

$$
M = \left[
\begin{array}{c|c}
\hat{D} & \Gamma
\end{array}\right],
$$

\noindent where $\hat D \in \{0,1,-1\}^{|E| \times |V|-1}$ is the directed incidence matrix -without its first column-, and $\Gamma \in \{0,1,-1\}^{|E| \times \nu}$ is the cycle matrix induced by $T$. As the columns of $\hat D$ and $\Gamma$ generate orthogonal subspaces, $M$ has full rank and consquently the linear system has a unique solution.

\medskip

A natural question to consider is: how to choose $T$ such that its induced cycle matrix $\Gamma$ is as sparse as possible, in order to apply fast linear solvers. In more general terms, this is the formulation of the  Minimum Cycle Basis problem, defined independently by 
Stepanec \cite{Stepanec:1964} and Zykov \cite{Zykov:1969}, that can be expressed as: 
given a connected graph, find a cycle basis such that the sum of the length of its cycles is minimum.
The variant of this problem where the basis is induced by the cycles of a spanning tree is the Minimum Fundamental Cycle Basis (MFCB) defined by Hubicka and Syslo \cite{Hubicka:1975}. A complete description of this family of problems is developed by Kavitha et al.~\cite{Kavitha:2009}. The computational complexity of the MFCB problem belongs to the NP-Hard class \cite{Deo:1982}. Given this computational restriction, we considered an equivalent version of the linear system:

$$
M^t M
\begin{bmatrix}
	\bf \hat x\\
	\bf y
\end{bmatrix}
=
M^t \bf w.
$$

\noindent where the \emph{Gram} matrix $M^t M$ has the following form:
$$
M^t M
=
\left[
\begin{array}{c|c}
\hat L & 0 \\
0 & \hat \Gamma
\end{array}
\right]
$$

\noindent where $\hat L$ is the 
\emph{Laplacian matrix} of $G$ -without its first column-, and 
$\hat \Gamma$ is the \emph{cycle intersection matrix} induced by $T$, that is: a matrix in which the $ij$-entry contains the number of edges that the tree-cycles $c_i$ and $c_j$ have in common. This matrix is interesting because its blocks have good properties: there are fast solvers for Laplacian matrices and $\hat \Gamma$ is -at least- symmetric, and it might have other good properties.

\medskip

In this context an equivalent natural question can be considered: how to choose $T$ such that $\hat \Gamma$ is as sparse as possible. 
This motivated us to formulate the Minimum Spanning Tree Cycle Intersection (MSTCI), which can be expressed as: given a connected graph $G$, find a spanning tree $T$ such that the number of non-empty pairwise intersections of its tree-cycles is minimum. 

\medskip

The MSTCI problem is not well studied yet, for example its complexity class is unknown. One interesting result states that if $G$ contains a \emph{universal vertex} -one that is connected to every other vertex-, then the \emph{star spanning tree} is a solution~\cite{CHEN202219}; in particular this can solve the case of dense graphs: $|E| > (|V|^2-2|V|)/2$. 

\medskip

For a spanning tree $T$ of a connected graph $G$, the complement of the sparsity of $\hat \Gamma$ can be measured by the intersection number w.r.t. $T$, $\cap_T(G)$: the number of non-empty pairwise intersections between its tree-cycles. Expressed in these terms, a solution of the MSTCI problem minimizes this quantity which we call the intersection number of $G$, denoted $\cap(G)$.

\medskip

In this paper, we present two lower bounds of $\cap(G)$ of an arbitrary connected graph $G$. This is relevant to measure the sparsity of $\hat \Gamma$, and it is the first result in a general context.

\medskip

The structure of the article is as follows. Section 2 sets notation, convenient definitions and auxiliary results. Section 3 presents a proof of a lower bound of the intersection number of a connected graph: 

$$\frac{1}{2}\left(\frac{\nu^2}{n-1} - \nu\right) \leq \cap(G),$$

\noindent where $n = |V|$. In Section 4, an improved tight lower bound is conjectured based on experimental results: 

$$(n-1) \binom{q}{2} + q \ r\leq \cap(G),$$

\noindent where $2 \nu = q (n-1) + r$ is the integer division of $2 \nu$ and $n-1$.~Finally, Section 5 summarizes the conclusions. 

\section{Preliminaries}


\subsection{Notation}
Let $G=(V,E)$ be a graph and $T$ a spanning tree of $G$. 
The number of vertices and edges will be  
$|V| = n$ and $|E| = m$, resp. The number of 
cycles of a cycle 
basis is $\nu = m - n + 1$, known as the 
\emph{cyclomatic number} of $G$. The 
unique path between $u,v \in V$ in the spanning tree $T$ is denoted $uTv$. The degree of a vertex $v \in V$ will be denoted $d(v)$.

\medskip

We will 
refer to the edges $e\in T$ as 
\emph{tree-edges} and to the edges 
in $G-T$ as \emph{cycle-edges}.
Every cycle-edge $e$ induces a cycle in $T \cup 
\{e\}$, which we will call a \emph{tree-cycle}. 
We will denote $\cap_G(T)$ to 
the number of non-empty tree-cycle pairwise 
intersections w.r.t. $T$, being $\cap(G) = 
\cap_G(T)$ the \emph{intersection number} 
of $G$ in the case where $T$ is a solution of the 
MSTCI problem. 

\medskip

A \emph{universal vertex} $u$ of $G$ is a vertex incident to 
all the other vertices, i.e. $d(u) = n - 1$. We shall call \emph{star spanning tree} to one that has one vertex that connects to all other vertices -note that a universal vertex induces a star spanning tree.$G$ is a \emph{regular graph} if all its vertices have the same degree. 

\medskip

Finally, we will refer to the terms \emph{vertex} 
and \emph{node} interchangeably.

\subsection{Auxiliary lemmas and definitions}

In this section we present some auxiliary lemmas 
and definitions. In the following, let $G=(V,E)$ be a 
connected graph and $T = (V, E')$ a spanning tree of $G$.

\medskip

\begin{definition}~\cite{Bondy2008}
Let $e \in E'$ be a tree-edge, and consider the two 
connected components $T_1$ and $T_2$ determined 
by $T - \{e\}$. We define the 
\emph{bond} $b_e$  
as the maximal set of edges $(v,w) \in E$ such 
that $v \in T_1$ and $w \in T_2$. 
\end{definition} 

\begin{lemma}\label{lemma:bond_pairwise_intersect}
Let $e \in E'$ be an edge of $T$ and $b_e$ its 
corresponding bond. Then, the edges in $b_e - \{e\}$ are 
cycle-edges that determine tree-cycles that intersect pairwise.
\end{lemma}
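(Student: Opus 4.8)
The plan is to show that every tree-cycle determined by an edge of $b_e-\{e\}$ passes through the single tree-edge $e$; once this is established, any two such tree-cycles automatically share $e$ and hence intersect, which is exactly the assertion.

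First I would argue that the edges of $b_e-\{e\}$ are indeed cycle-edges. Since $T$ is a spanning tree, removing $e$ splits it into exactly the two components $T_1$ and $T_2$, and $e$ is the only tree-edge joining them: if a second tree-edge crossed the cut, then $T-\{e\}$ would remain connected, contradicting the fact that $T_1$ and $T_2$ are distinct components. Therefore every edge of $b_e$ other than $e$ lies in $G-T$, i.e. it is a cycle-edge.

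Next I would fix a cycle-edge $f=(u,v)\in b_e-\{e\}$ with $u\in T_1$ and $v\in T_2$. The tree-cycle it determines consists of $f$ together with the unique tree-path $uTv$. Because $u$ and $v$ lie in different components of $T-\{e\}$, this path must cross the cut separating $T_1$ from $T_2$; as $e$ is the only tree-edge crossing that cut, the path $uTv$ necessarily contains $e$. Hence the tree-cycle of every edge of $b_e-\{e\}$ contains the edge $e$. To finish, given any two distinct cycle-edges $f,f'\in b_e-\{e\}$, their tree-cycles both contain $e$, so their intersection is non-empty and the tree-cycles intersect pairwise.

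The main obstacle — and really the only place requiring care — is the cut argument showing that $e$ is the \emph{unique} tree-edge separating $T_1$ from $T_2$, which is what forces every crossing tree-path through $e$. This uses only the defining property of a bond together with the acyclicity of $T$, so once it is in place the remaining two steps follow immediately.
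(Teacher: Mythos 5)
Your proof is correct and follows essentially the same route as the paper's: show that each edge of $b_e-\{e\}$ is a cycle-edge whose tree-cycle must contain $e$ (since its endpoints lie in the two components $T_1$ and $T_2$ of $T-\{e\}$), so any two such tree-cycles share $e$. You simply make explicit the cut-uniqueness argument that the paper leaves implicit, which is a fine elaboration but not a different approach.
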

\begin{proof}
Let $T_1, T_2$ be the two connected 
components determined by $T-\{e\}$, and let 
$(v,w) \in b_e-\{e\}$. By definition, $v \in T_1$ 
and $w \in T_2$, thus $vTw$ must contain $e$ and 
$vTw \cup (v,w)$ determines a tree-cycle. This 
implies that $(v,w)$ is a cycle-edge and that the 
edges in $b_e-\{e\}$ intersect pairwise.
\end{proof}

\begin{lemma}\label{lemma:cycle_edge_every_bond}
Let $f = (v,w) \in E-E'$ be a cycle-edge. Then $f$ 
belongs to the bonds corresponding to the edges of 
$vTw$.
\end{lemma}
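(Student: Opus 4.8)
The plan is to fix an arbitrary tree-edge $e$ lying on the path $vTw$ and to verify directly that $f=(v,w)$ satisfies the membership condition defining the bond $b_e$. By definition, deleting $e$ splits $T$ into two connected components $T_1$ and $T_2$, and $b_e$ is the maximal set of edges of $E$ with one endpoint in $T_1$ and the other in $T_2$. The crux of the argument is to show that the two endpoints $v$ and $w$ of the cycle-edge fall in different components; once this is established, the conclusion $f\in b_e$ is immediate.

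The separation claim is where the tree structure enters. Because $T$ is a spanning tree, there is a unique path between any two vertices, and $vTw$ is precisely that path for the pair $v,w$. By hypothesis $e$ is an edge of $vTw$, so I would argue that removing $e$ destroys the only $v$--$w$ connection in $T$: if $v$ and $w$ lay in the same component $T_i$ of $T-\{e\}$, there would be a $v$--$w$ path inside $T-\{e\}$ avoiding $e$, and hence a second $v$--$w$ path in $T$, contradicting the uniqueness of $vTw$. Therefore $v$ and $w$ belong to different components, say $v\in T_1$ and $w\in T_2$.

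With the endpoints separated, $f=(v,w)$ is an edge of $E$ joining $T_1$ to $T_2$, so it lies in the maximal such set $b_e$. Since $e$ was an arbitrary edge of $vTw$, this shows $f$ belongs to the bond of every edge on the path, as claimed. I do not anticipate a genuine obstacle: the only subtle point is the separation step, which follows directly from the uniqueness of paths in a tree, and the remainder is simply matching the definition of a bond.
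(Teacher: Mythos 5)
Your proof is correct and follows essentially the same route as the paper's: fix an edge $e$ of $vTw$, observe that $v$ and $w$ fall in different components of $T-\{e\}$, and conclude $f\in b_e$ by the definition of the bond. The only difference is that you justify the separation step (via uniqueness of tree paths) which the paper dismisses with ``clearly''; this is a welcome bit of rigor but not a different argument.
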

\begin{proof}
Let $e \in vTw$. Consider the two connected 
components $T_1, T_2$ determined by $T - \{e\}$. 
Clearly $v \in T_1$ and $w \in T_2$ then by 
definition $f \in b_e$. 
\end{proof}

\begin{corollary}\label{coro:cycle_edge_2_bonds}
Every cycle-edge belongs to at least two bonds.
\end{corollary}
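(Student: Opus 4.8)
The plan is to obtain this directly as a consequence of Lemma~\ref{lemma:cycle_edge_every_bond}, which already shows that a cycle-edge $f=(v,w)$ belongs to the bond $b_e$ of \emph{every} tree-edge $e$ lying on the path $vTw$. Thus the number of bonds containing $f$ equals the number of edges in $vTw$, and it suffices to argue that this path has length at least two.

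First I would fix an arbitrary cycle-edge $f=(v,w) \in E-E'$. Since $f$ is an edge of $G$, its endpoints $v$ and $w$ are distinct, so the tree path $vTw$ is nonempty and contains at least one tree-edge. The key step is to rule out the degenerate case in which $vTw$ consists of a single edge: if $v$ and $w$ were adjacent in $T$, the unique edge joining them in the (simple) graph $G$ would be that tree-edge, contradicting the fact that $f=(v,w)$ is itself an edge joining $v$ and $w$ that lies in $G-T$. Hence $v$ and $w$ are not adjacent in $T$, and the path $vTw$ must pass through at least one intermediate vertex, so it contains at least two tree-edges $e_1, e_2$.

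Applying Lemma~\ref{lemma:cycle_edge_every_bond} to each of $e_1$ and $e_2$ then yields $f \in b_{e_1}$ and $f \in b_{e_2}$, exhibiting two distinct bonds to which $f$ belongs and completing the argument.

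I expect the only real subtlety to be the second step, namely the implicit reliance on $G$ being a \emph{simple} graph when excluding the single-edge path: in a multigraph a parallel cycle-edge could coexist with a tree-edge between the same pair of vertices, in which case $vTw$ would have length one and $f$ would lie in only one bond. Since the preliminaries treat $G$ as a simple connected graph, this assumption is available and the obstacle dissolves.
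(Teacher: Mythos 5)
Your proposal is correct and follows essentially the same route as the paper: both reduce the claim to showing that the tree path $vTw$ contains at least two edges and then apply Lemma~\ref{lemma:cycle_edge_every_bond}. The paper justifies the length bound by noting that every cycle has at least three edges, while you argue directly from the absence of parallel edges in a simple graph; these are the same simplicity assumption phrased two ways, and your explicit flagging of it is a fair observation rather than a divergence.
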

\begin{proof}
Let $f = (v,w) \in E-E'$ be a cycle-edge. Since every cycle contains at least three edges, it follows that  
$vTw$ has at least length two. By the previous 
lemma the claim follows.
\end{proof}

Let $B_T = \{b_e\}_{e \in E'}$ be the set of bonds 
w.r.t. $T$. Then -by Lemma~\ref{lemma:bond_pairwise_intersect}- 
computing the total number of pairs of cycle-edges 
in $B_T$ overestimates $\cap_G(T)$:

$$\sum_{e \in E'} \binom{|b_e| - 1}{2} \geq 
\cap_G(T).$$

The overestimation is due to redundant 
intersections. Note that if the intersection of two tree-cycles $c_1, c_2$ contains two or more tree-edges, then the corresponding pair of cycle-edges will be contained in more than one bond. The next definition addresses this problem at the cost of incurring in underestimation.

\begin{definition} 
Let $B_T = \{b_e\}_{e \in E'}$ be 
the set of bonds w.r.t. $T$. Then a \emph{non-redundant bond set} is a set $\hat B_T = 
\{\hat b_e\}_{e \in E'}$ such that:
\begin{itemize}
	\item $\hat b_e \subseteq b_e$
	\item every edge $e \in E$ belongs to exactly one $\hat b_e$.
\end{itemize}
\end{definition}

Note that as $E$ is finite, the bonds of $T$ and 
$B_T$ exist and are finite. Thus in order to define a \emph{non-redundant bond set} $\hat B_T$, it suffices to remove all but one occurrence of the duplicated edges in the bonds of $B_T$. 

\begin{lemma}\label{lemma:bound}
If $T$ is a solution for the MSTCI problem 
w.r.t. $G$, $\hat B_T = \{\hat b_e\}_{e \in E'}$ a non-redundant bond set and $\phi_e = |\hat b_e| - 1$ for 
every $e \in E'$, then 

\begin{itemize}
	\item $\sum_{e \in E'} \binom{\phi_e}{2} \leq \cap(G)$
	\item $\nu = \sum_{e \in E'} \phi_e$.
\end{itemize}
\end{lemma}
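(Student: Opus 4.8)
The plan is to prove the two bullet points separately, since they concern different aspects of the non-redundant bond set. I would begin with the second bullet, the identity $\nu = \sum_{e \in E'} \phi_e$, because it is purely a counting argument and sets up the framework. The key observation is that $\phi_e = |\hat b_e| - 1$ subtracts off exactly one edge per bond, namely the tree-edge $e$ itself. By Lemma~\ref{lemma:cycle_edge_every_bond} and the definition of the non-redundant bond set, every edge of $E$ belongs to exactly one $\hat b_e$. Since $T = (V, E')$ is a spanning tree with $n-1$ edges, each tree-edge $e \in E'$ lies in its own bond $b_e$ (and can be kept as the representative in $\hat b_e$), so the $n-1$ tree-edges are distributed one-to-one among the bonds. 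The remaining edges are exactly the $m - (n-1) = \nu$ cycle-edges. Thus $\sum_{e \in E'} |\hat b_e| = m$, and subtracting one per bond gives $\sum_{e \in E'} \phi_e = m - (n-1) = \nu$, as claimed.

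For the first bullet, the strategy is to show that $\sum_{e \in E'} \binom{\phi_e}{2}$ counts a subcollection of the genuine tree-cycle intersections, hence is at most $\cap_G(T) = \cap(G)$. The idea is that within a single bond $\hat b_e$, the $\phi_e = |\hat b_e| - 1$ cycle-edges (after removing the tree-edge $e$) determine $\binom{\phi_e}{2}$ pairs of tree-cycles, each of which is a \emph{genuine, non-empty} intersection by Lemma~\ref{lemma:bond_pairwise_intersect}, since all these tree-cycles share the edge $e$. So each bond contributes $\binom{\phi_e}{2}$ intersecting pairs. The crucial point — and why we get a lower bound rather than the overestimate of the earlier display — is that the non-redundancy condition ensures no pair of cycle-edges is counted in two different bonds: each cycle-edge appears in exactly one $\hat b_e$, so a given pair $(f, g)$ of cycle-edges can co-occur in at most one non-redundant bond. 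Therefore $\sum_{e \in E'} \binom{\phi_e}{2}$ counts each pair at most once, and every counted pair is a real intersection; summing over all bonds gives a quantity no larger than the total number of intersecting pairs, which is $\cap(G)$.

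I expect the main subtlety to lie in the first bullet, specifically in arguing carefully that the bound is an \emph{under}estimate rather than an overestimate. The earlier inequality $\sum_{e \in E'} \binom{|b_e| - 1}{2} \geq \cap_G(T)$ overcounts precisely because a single pair of tree-cycles whose intersection contains two or more tree-edges is charged once for each shared tree-edge. By passing to the non-redundant set we assign each cycle-edge to one bond only, which can only \emph{lose} co-occurrences of a pair — it never creates a pair that wasn't a real intersection. I would make this precise by defining, for each unordered pair of cycle-edges $\{f,g\}$ that intersect, an injection from the pairs counted by $\sum_{e} \binom{\phi_e}{2}$ into the set of intersecting tree-cycle pairs, and noting the map is well-defined because $f$ and $g$ can share at most the one bond in which they are both retained. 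The counting of the second bullet, by contrast, is routine once the one-to-one distribution of tree-edges is observed.
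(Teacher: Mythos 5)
Your proof is correct and follows essentially the same route as the paper: the first bullet via Lemma~\ref{lemma:bond_pairwise_intersect} plus the observation that non-redundancy lets each pair of cycle-edges co-occur in at most one $\hat b_e$, and the second bullet by counting that each edge of $E$ lies in exactly one $\hat b_e$ with the tree-edges distributed one per bond. Your extra remarks (doing the counting identity first, noting $e \in \hat b_e$ is forced since a tree-edge lies in no other bond, and phrasing the first bullet as an injection into the set of intersecting pairs) only make explicit what the paper's proof leaves implicit.
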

\begin{proof}
By Lemma~\ref{lemma:bond_pairwise_intersect} the edges in 
$\hat b_e - \{e\}$ are cycle-edges that intersect pairwise. 
There are $\phi_e =|\hat b_e| - 1$ 
cycle-edges in $\hat b_e - \{e\}$, then $\hat b_e$ 
accounts for $\binom{\phi_e}{2}$ pairwise 
intersections. Since every edge belongs to exactly 
one \emph{non-redundant bond}, every tree-cycle 
pairwise intersection is counted at most 
once, then the first inequality holds. As each bond 
contains exactly $\phi_e$ cycle-edges and every 
cycle-edge belongs to exactly one bond, the second 
equation follows.
\end{proof}

The next Theorem clarifies the case of graphs that 
contain a universal vertex in the context of the MSTCI problem.

\begin{theorem}\label{teo:universal_vertex}~\cite{CHEN202219}
	If a graph $G$ admits a star spanning tree $T_s$, then

	$$\cap_G(T_s) \leq \cap_G(T)$$

	for any spanning tree $T$ of $G$.
\end{theorem}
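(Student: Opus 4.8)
The plan is to evaluate $\cap_G(T_s)$ in closed form and then show that every spanning tree meets this value from above. Let $u$ be the centre of $T_s$. Every cycle-edge w.r.t. $T_s$ is an edge $(a,b)$ with $a,b\neq u$, and its tree-cycle is the triangle on $\{u,a,b\}$, which uses exactly the two spokes $(u,a)$ and $(u,b)$. Hence two tree-cycles meet iff their cycle-edges share an endpoint, and since distinct edges share at most one endpoint no intersection is counted twice; equivalently, the bond of the spoke $(u,v)$ is precisely the set of $d(v)$ edges incident to $v$, so the bond overestimate $\sum_{e}\binom{|b_e|-1}{2}\geq\cap_G(T_s)$ is here attained with equality. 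This yields $\cap_G(T_s)=\sum_{v\neq u}\binom{d(v)-1}{2}$, i.e. the number of paths of length two (``cherries'') in $G-u$, and it remains to prove $\cap_G(T)\geq\sum_{v\neq u}\binom{d(v)-1}{2}$ for an arbitrary spanning tree $T$.

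To attack this I would root $T$ at $u$ and transform it into $T_s$ by \emph{promotions}: while some vertex $v$ has parent $p\neq u$, replace the tree-edge $(p,v)$ by the universal edge $(u,v)$ (which exists because $u$ is universal), obtaining $T'$. Write $S$ for the subtree hanging from $v$ and $R=V\setminus S$; the only tree-edge across the cut $(R,S)$ is $(p,v)$ in $T$ and $(u,v)$ in $T'$. The value of this framing is that most pairs are unaffected. By Lemma~\ref{lemma:bond_pairwise_intersect} all cycle-edges crossing the cut pairwise intersect in each tree (they all traverse the unique cut tree-edge) and there are equally many of them, so their mutual intersections contribute the same $\binom{k}{2}$ to $\cap_G(T)$ and $\cap_G(T')$; pairs of edges lying entirely inside $R$ or entirely inside $S$ keep their tree-paths verbatim, and pairs formed by a crossing edge and an edge internal to $S$ are governed only by the (unchanged) $S$-portions, so all of these are unchanged as well. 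Finally the two swapped edges $(u,v)$ and $(p,v)$ reroute along the same path $uTp=pTu$, so their contributions cancel.

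The entire difference between $\cap_G(T)$ and $\cap_G(T')$ is thus concentrated in one residual term: pairs formed by a crossing cycle-edge $f=(r,s)$ with $r\in R$ and a cycle-edge $h$ internal to $R$, for which the comparison reduces to whether the rerouted $R$-portion of $f$ — which changes from the path $rTp$ in $T$ to the path $rTu$ in $T'$ — still meets the fixed path of $h$. This is the main obstacle, and it is genuinely delicate: the per-edge change of this indicator is not sign-definite, and in fact a carelessly chosen promotion can strictly \emph{increase} $\cap_G$, so optimal trees need not be joined to $T_s$ by a monotone sequence of single swaps. The real content is therefore to control this residual rather than to hope for greedy monotonicity — for instance by always promoting a subtree $S$ all of whose crossing cycle-edges emanate from $u$ (so that every rerouted portion $rTu$ is empty and the swap can only decrease the count), and proving that a non-increasing sequence of such safe promotions always reaches $T_s$; equivalently, by an amortized, global accounting of the residual. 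Taming these crossing edges between parallel subtrees, and exploiting that promotion always reattaches $S$ to the root through the most economical edge $(u,v)$, is where essentially all the work lies.
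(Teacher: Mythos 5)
Your opening reduction is sound: the closed form $\cap_G(T_s)=\sum_{v\neq u}\binom{d(v)-1}{2}$ is precisely Lemma~\ref{lemma:star_formula}, and your bookkeeping for a single promotion is correct — pairs internal to $R$, pairs internal to $S$, pairs of crossing edges, and pairs mixing a crossing edge with an $S$-internal edge all contribute identically to $\cap_G(T)$ and $\cap_G(T')$, so the whole difference sits in the pairs formed by a crossing cycle-edge and a cycle-edge internal to $R$. But the proposal stops exactly where the theorem begins. You concede yourself that this residual is not sign-definite, i.e.\ a promotion can strictly increase the intersection count, so the greedy induction you set up does not close; and neither of the rescues you name is carried out. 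Worse, the concrete rescue you propose (``always promote a subtree $S$ all of whose crossing cycle-edges emanate from $u$'') can be unavailable: let $u$ be universal, let $T$ give $u$ two children $a,b$, let $a_1$ be a child of $a$ and $b_1$ a child of $b$, and suppose $(a_1,b_1)\in E$. The only subtrees whose parent differs from $u$ are $\{a_1\}$ and $\{b_1\}$, and each has the crossing cycle-edge $(a_1,b_1)$, which does not emanate from $u$; so no ``safe'' promotion exists even though $T\neq T_s$. Your scheme cannot take its first step in this five-vertex graph, and the fallback (``an amortized, global accounting of the residual'') is a restatement of the difficulty, not an argument. This is a genuine gap: what is missing is precisely the content of the theorem, namely a device that compares an \emph{arbitrary} tree to the star without requiring a monotone path of local swaps between them.

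For calibration, note that the paper under review does not prove this statement at all — it imports it from \cite{CHEN202219} and uses it as a black box (e.g.\ in Lemma~\ref{lemma:star_formula} and Lemma~\ref{lemma:mu_equidistributed}). So there is no internal proof to measure you against; judged on its own terms, your text is a correct reduction plus an honest description of the obstruction, but not a proof of the inequality $\cap_G(T_s)\leq\cap_G(T)$.
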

	
\begin{corollary}\label{coro:star_solves_mstci}~\cite{CHEN202219}
	If a graph $G$ admits a star spanning tree $T_s$, then $T_s$
	is a solution of the MSTCI problem w.r.t. $G$.
\end{corollary}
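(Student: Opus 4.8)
The plan is to derive the corollary directly from Theorem~\ref{teo:universal_vertex} by unfolding the definition of the MSTCI problem. Recall that a spanning tree $T$ is declared a solution of the MSTCI problem precisely when $\cap_G(T)$ is minimal among all spanning trees of $G$; equivalently, $\cap(G) = \min_T \cap_G(T)$, where the minimum is taken over every spanning tree $T$ of $G$. So the task reduces to verifying that $T_s$ attains this minimum.

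First I would invoke Theorem~\ref{teo:universal_vertex}, which guarantees that whenever $G$ admits a star spanning tree $T_s$, the inequality $\cap_G(T_s) \leq \cap_G(T)$ holds for every spanning tree $T$ of $G$. This says that $T_s$ achieves a value of the intersection number that is no larger than that of any competing spanning tree. The concluding step is then to observe that dominating all other spanning trees in this way is exactly the condition for being a minimizer: taking the minimum of $\cap_G(T)$ over all $T$ on the right-hand side of the inequality yields $\cap_G(T_s) = \min_T \cap_G(T) = \cap(G)$, so $T_s$ is a solution of the MSTCI problem.

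There is no substantial obstacle here: the content is carried entirely by Theorem~\ref{teo:universal_vertex}, and the corollary is merely a reformulation of that inequality in the language of the optimization problem. The only point that requires a moment of care is making explicit that the defining property of an MSTCI solution is minimality of $\cap_G(\cdot)$ over the finite set of spanning trees of $G$, so that any tree whose intersection number bounds all the others from below is automatically optimal.
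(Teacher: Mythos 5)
Your proof is correct and coincides with the paper's (implicit) argument: the corollary is an immediate consequence of Theorem~\ref{teo:universal_vertex}, since $\cap_G(T_s) \leq \cap_G(T)$ for every spanning tree $T$ means precisely that $T_s$ minimizes $\cap_G(\cdot)$, which is the defining condition for an MSTCI solution. Nothing more is needed.
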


\begin{lemma}\label{lemma:star_formula}~\cite{Dubinsky2021} 
	If $G$ is a graph that admits a star spanning $T_s$, then 
	$$\cap(G) = \cap_G(T_s) = \sum_{u \in V-\{v\}} \binom{d(u) - 1}{2}.$$
\end{lemma}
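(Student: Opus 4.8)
The plan is to compute $\cap_G(T_s)$ directly by exploiting the very simple structure of tree-cycles in a star. Let $v$ denote the center of $T_s$, so that the tree-edges are exactly the edges $(v,u)$ for $u \in V - \{v\}$, and consequently every cycle-edge joins two non-central vertices. For such a cycle-edge $f = (a,b)$ with $a,b \neq v$, the path $aT_sb$ is simply $a - v - b$, so the induced tree-cycle is the triangle on $\{v,a,b\}$, consisting of the two tree-edges $(v,a),(v,b)$ together with $f$.

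First I would characterize when two distinct tree-cycles intersect. Since distinct cycle-edges are never shared, two tree-cycles can only meet in a tree-edge $(v,u)$, and they both contain this tree-edge precisely when their defining cycle-edges both have $u$ as an endpoint. Assuming $G$ is simple, two distinct cycle-edges share at most one endpoint, so any intersecting pair of tree-cycles meets in exactly one tree-edge, at a single common non-central vertex.

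Next I would count. For a fixed non-central vertex $u$, one of its $d(u)$ incident edges is the tree-edge $(v,u)$, so exactly $d(u)-1$ cycle-edges are incident to $u$; these contribute $\binom{d(u)-1}{2}$ pairs of tree-cycles intersecting at $u$. Summing over all non-central vertices yields $\sum_{u \in V-\{v\}} \binom{d(u)-1}{2}$, and by the previous paragraph each intersecting pair is counted exactly once, since it has a unique common non-central vertex. Hence $\cap_G(T_s) = \sum_{u \in V-\{v\}} \binom{d(u)-1}{2}$. Finally, Corollary~\ref{coro:star_solves_mstci} guarantees that $T_s$ solves the MSTCI problem, so $\cap(G) = \cap_G(T_s)$, completing the identity.

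The main obstacle is the no-double-counting argument: I must ensure that summing $\binom{d(u)-1}{2}$ over vertices does not count any pair of tree-cycles twice. This is exactly where simplicity of $G$ enters, since two parallel cycle-edges on the same pair $\{a,b\}$ would share two endpoints and break the correspondence between intersecting pairs and common vertices. Once simplicity is invoked, the map sending an intersecting pair to its shared non-central vertex is well-defined with fibers of size $\binom{d(u)-1}{2}$, so the total is exact rather than an over- or under-estimate.
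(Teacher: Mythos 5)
Your proposal is correct, and it is worth noting that the paper itself offers no proof of this lemma at all: it is imported verbatim from the cited reference~\cite{Dubinsky2021}. So there is nothing in the present text to compare against; your argument stands as a self-contained derivation of a fact the paper treats as a black box. The route you take is the natural one: in a star, every tree-cycle is a triangle through the center $v$, two distinct tree-cycles can only share tree-edges, and they share the tree-edge $(v,u)$ exactly when their defining cycle-edges are both incident to $u$; since in a simple graph two distinct edges share at most one endpoint, the map from intersecting pairs to their common non-central vertex is well defined, and the fiber over $u$ has size $\binom{d(u)-1}{2}$, giving $\cap_G(T_s)=\sum_{u\in V-\{v\}}\binom{d(u)-1}{2}$. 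The final appeal to Corollary~\ref{coro:star_solves_mstci} to upgrade $\cap_G(T_s)$ to $\cap(G)$ is exactly the right move and is the only place where optimality of the star is needed. Your explicit flagging of simplicity is also appropriate rather than pedantic: the paper does assume simple graphs implicitly (e.g., the proof of Corollary~\ref{coro:cycle_edge_2_bonds} asserts every cycle has at least three edges), and your counting genuinely breaks for parallel edges, so isolating that hypothesis is where the content of the no-double-counting step lives.
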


The next lemma is a well-known statement in convex optimization.

\begin{lemma}\label{lemma:strictly_convex}\cite[\S 3.1.4]{Boyd2004-mk}
	A quadratic function $f: \mathbb{R}^n \rightarrow 
	\mathbb{R}$ defined as,
	
	$$f(x_1, \dots, x_n) = \bm x^T \bm Q \bm x + \bm c^T \bm x + d$$
	is strictly convex if and only if $Q$ is positive definite.
\end{lemma}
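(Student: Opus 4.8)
The plan is to reduce the (strict) convexity of $f$ to a one-dimensional question along straight lines, where the behaviour of a quadratic is entirely controlled by its leading coefficient. Since $\bm x^T \bm Q \bm x = \bm x^T \left(\tfrac12(\bm Q + \bm Q^T)\right) \bm x$, I may assume without loss of generality that $\bm Q$ is symmetric, because both the quadratic form and the notion of positive definiteness depend only on the symmetric part of $\bm Q$. I would then recall the standard fact that $f$ is strictly convex if and only if, for every point $\bm x_0$ and every direction $\bm v \neq \bm 0$, the one-variable restriction $g(t) = f(\bm x_0 + t \bm v)$ is strictly convex on $\mathbb{R}$; this follows directly from the definition of strict convexity applied to the segment joining two distinct points and the line through them.

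The key computation is to expand this restriction. First I would substitute $\bm x_0 + t\bm v$ into $f$ and collect powers of $t$, obtaining $g(t) = (\bm v^T \bm Q \bm v)\, t^2 + (\text{linear in } t) + (\text{const})$, so that the leading coefficient is exactly $\bm v^T \bm Q \bm v$. Then I would invoke the elementary fact that a single-variable quadratic $a t^2 + b t + c$ is strictly convex precisely when $a > 0$: when $a = 0$ it is affine, hence convex but not strictly so, and when $a < 0$ it is not convex at all. Combining these two observations, $f$ is strictly convex if and only if $\bm v^T \bm Q \bm v > 0$ for every $\bm v \neq \bm 0$, which is the definition of $\bm Q$ being positive definite. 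Making the implications explicit: if $\bm Q$ is positive definite every restriction has positive leading coefficient and is strictly convex, so $f$ is strictly convex; and if $\bm Q$ is not positive definite there is some $\bm v \neq \bm 0$ with $\bm v^T \bm Q \bm v \leq 0$, along which the restriction fails to be strictly convex, contradicting strict convexity of $f$.

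The delicate point is the boundary case $\bm v^T \bm Q \bm v = 0$ in the converse. A naive appeal to the general principle ``strictly convex $\iff$ positive definite Hessian'' is false for arbitrary functions (for instance $x \mapsto x^4$ is strictly convex yet has vanishing second derivative at the origin), so I expect the main care to lie in observing that for a \emph{quadratic} the borderline direction produces an \emph{affine} restriction, which is convex but not strictly convex, and this alone already contradicts strict convexity of $f$. This is precisely the feature that upgrades the semidefinite characterization of ordinary convexity to a definite characterization of strict convexity in the quadratic case.
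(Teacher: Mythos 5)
Your proof is correct, but note that the paper never proves this lemma itself: it is stated as a known result and delegated entirely to the citation of Boyd and Vandenberghe, \S 3.1.4, where the standard route is the second-order characterization (the Hessian of $f$ is the constant matrix $Q+Q^{T}$; positive semidefiniteness characterizes convexity, and positive definiteness is the sufficient condition for strict convexity, with the converse holding in the quadratic case). Your argument is a genuinely different and fully self-contained route: you reduce strict convexity to strict convexity of all one-dimensional restrictions $g(t)=f(x_{0}+tv)$, compute that the leading coefficient of this restriction is exactly $v^{T}Qv$, and invoke the trivial classification of one-variable quadratics ($at^{2}+bt+c$ is strictly convex iff $a>0$). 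What this buys is precisely the point you flag as delicate: for general smooth functions, strict convexity does \emph{not} imply a positive definite Hessian (e.g.\ $x\mapsto x^{4}$), so the textbook's sufficient condition cannot be naively reversed; your observation that a direction with $v^{T}Qv=0$ makes the restriction affine --- convex but not strictly convex --- closes the converse with no calculus and no eigendecomposition. Your preliminary symmetrization of $Q$ is also handled correctly, since both the quadratic form and the notion of positive definiteness depend only on the symmetric part $\tfrac{1}{2}(Q+Q^{T})$. In short: correct, and arguably more elementary and more transparent than the argument the paper points to.
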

	
\begin{lemma}\label{lemma:matrix_is_strictly_convex}
	The matrix $M_n=\bm I + \bm 
	1$ 
	where $\bm I$ is the 
	$n \times n$ identity matrix and 
	$\bm 1$ is the $n \times n$ matrix 
	whose entries are all equal to 1 is positive 
	definite.
\end{lemma}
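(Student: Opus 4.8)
The plan is to verify positive definiteness directly from the definition of the quadratic form, rather than by computing a determinant or a characteristic polynomial. Since both $\bm I$ and $\bm 1$ are symmetric, $M_n = \bm I + \bm 1$ is symmetric, so it suffices to show that $\bm x^T M_n \bm x > 0$ for every nonzero $\bm x \in \mathbb{R}^n$. This reduces the whole statement to an elementary algebraic identity for the associated quadratic form.

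First I would expand the quadratic form by linearity:
$$\bm x^T M_n \bm x = \bm x^T \bm I \bm x + \bm x^T \bm 1 \bm x.$$
The first term is simply $\bm x^T \bm I \bm x = \sum_{i=1}^{n} x_i^2$. For the second term, the key observation is that the all-ones matrix factors as $\bm 1 = \bm u \bm u^T$, where $\bm u$ is the all-ones column vector; consequently $\bm x^T \bm 1 \bm x = (\bm u^T \bm x)^2 = \left(\sum_{i=1}^{n} x_i\right)^2$. Combining the two contributions yields
$$\bm x^T M_n \bm x = \sum_{i=1}^{n} x_i^2 + \left(\sum_{i=1}^{n} x_i\right)^2.$$

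The final step is to read off the sign. Both summands are nonnegative as sums and squares of real numbers, and the first summand $\sum_i x_i^2$ is strictly positive whenever $\bm x \neq \bm 0$. Hence $\bm x^T M_n \bm x > 0$ for all nonzero $\bm x$, which is precisely the definition of positive definiteness for the symmetric matrix $M_n$.

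I do not anticipate a genuine obstacle here: the only nonroutine point is recognizing the rank-one factorization $\bm 1 = \bm u \bm u^T$ that turns $\bm x^T \bm 1 \bm x$ into a perfect square. An alternative route would be spectral, noting that $\bm 1$ has eigenvalues $n$ (once) and $0$ (with multiplicity $n-1$), so $M_n = \bm I + \bm 1$ has eigenvalues $n+1$ and $1$, all positive; but the quadratic-form argument above is shorter, self-contained, and sidesteps any eigenvalue bookkeeping.
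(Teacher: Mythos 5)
Your proof is correct, but it takes a different route from the paper. The paper argues spectrally: it notes that $\bm 1$ has rank one, so $0$ is an eigenvalue of multiplicity $n-1$, and the remaining eigenvalue equals $\mathrm{tr}(\bm 1)=n>0$; hence $\bm 1$ is positive semidefinite, and adding the positive definite matrix $\bm I$ yields a positive definite sum. (This is essentially the ``alternative route'' you sketch in your last sentence, though the paper stops at semidefiniteness of $\bm 1$ rather than listing the eigenvalues $n+1$ and $1$ of $M_n$ itself.) You instead work directly with the quadratic form, using the rank-one factorization $\bm 1 = \bm u \bm u^T$ to get
\[
\bm x^T M_n \bm x \;=\; \sum_{i=1}^{n} x_i^2 + \Bigl(\sum_{i=1}^{n} x_i\Bigr)^{2} \;>\; 0 \quad \text{for } \bm x \neq \bm 0 .
\]
Your version is more elementary and self-contained: it needs no eigenvalue bookkeeping and no appeal to the fact that a positive semidefinite plus a positive definite matrix is positive definite. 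The paper's version is shorter given those standard facts and makes the spectral structure explicit, which is occasionally useful downstream (e.g.\ for conditioning estimates); but for the purpose of Theorem~\ref{teo:main}, which only needs strict convexity via Lemma~\ref{lemma:strictly_convex}, either argument suffices.
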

\begin{proof}	
	As $\bm 1$ has rank 1, then 0 is an eigenvalue with multiplicity $n-1$, and the remaining eigenvalue is $tr(\bm 1)=n$, the trace of $\bm 1$. So $\bm 1$ is positive semidefinite. As $\bm I$ is positive definite, we conclude that $M_n=\bm I + \bm 1$ is positive definite.
\end{proof}
	
\section{Proof of a lower bound}

A lower bound of the intersection number seems interesting 
both in theoretical and practical aspects. It gives an insight in the 
MSTCI problem, it can be useful for comparing algorithms and as mentioned in the introduction, it can approximate the sparsity of cycle intersection matrices. 
In this section we present a proof of a lower bound.

\bigskip

\begin{theorem}\label{teo:main}
If $G=(V,E)$ is a connected graph, then 
$$\frac{1}{2}\left(\frac{\nu^2}{n-1} - \nu\right) \leq \cap(G).$$
\end{theorem}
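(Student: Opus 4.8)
The plan is to reduce the statement to a constrained optimization problem via Lemma~\ref{lemma:bound} and then solve that problem using the convexity machinery of Lemmas~\ref{lemma:strictly_convex} and~\ref{lemma:matrix_is_strictly_convex}. Let $T$ be a solution of the MSTCI problem, fix a non-redundant bond set $\hat B_T = \{\hat b_e\}_{e \in E'}$, and set $\phi_e = |\hat b_e| - 1$. Lemma~\ref{lemma:bound} gives $\cap(G) \geq \sum_{e \in E'} \binom{\phi_e}{2}$ together with $\sum_{e \in E'}\phi_e = \nu$. Since a spanning tree on $n$ vertices has exactly $n-1$ edges, the index set $E'$ has size $n-1$. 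It therefore suffices to minimize $\sum_{e}\binom{\phi_e}{2}$ over all real $\phi_e$ lying on the hyperplane $\sum_e \phi_e = \nu$: the specific tree yields one integer point of this hyperplane, so the continuous minimum is a valid lower bound for $\cap(G)$, and relaxing integrality only weakens the estimate, which is acceptable.

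First I would rewrite the objective as $f(\phi_1,\dots,\phi_{n-1}) = \sum_e \binom{\phi_e}{2} = \tfrac12\big(\sum_e \phi_e^2 - \nu\big)$, using the constraint $\sum_e \phi_e = \nu$. Minimizing $f$ on the hyperplane is thus equivalent to minimizing $\sum_e \phi_e^2$ there. I would remove the constraint by substituting $\phi_{n-1} = \nu - \sum_{i=1}^{n-2}\phi_i$, producing an unconstrained quadratic $\tilde f$ in the $n-2$ free variables. A direct differentiation shows that the Hessian of $\tilde f$ has $2$ on its diagonal and $1$ off the diagonal, i.e. it equals $M_{n-2} = \bm I + \bm 1$, which is positive definite by Lemma~\ref{lemma:matrix_is_strictly_convex}. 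By Lemma~\ref{lemma:strictly_convex}, $\tilde f$ is then strictly convex, so its unique stationary point is its global minimizer.

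Next I would locate that minimizer. Setting the gradient of $\tilde f$ to zero forces $\phi_1 = \dots = \phi_{n-1}$, and combined with $\sum_e \phi_e = \nu$ this gives $\phi_e = \nu/(n-1)$ for every $e$. Substituting back, $f = \tfrac12\big((n-1)(\nu/(n-1))^2 - \nu\big) = \tfrac12\big(\nu^2/(n-1) - \nu\big)$, which is exactly the asserted value. Chaining the inequalities then yields $\cap(G) \geq \tfrac12(\nu^2/(n-1) - \nu)$.

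The arithmetic is routine; the step that needs genuine care is the constrained minimization --- specifically, confirming that eliminating the constraint produces precisely the matrix $M_{n-2}$ so that Lemma~\ref{lemma:matrix_is_strictly_convex} applies, which upgrades the lone stationary point to the global minimum of $f$ over the entire hyperplane (a bound that a fortiori holds at the integer point furnished by $T$). I note that the convexity route can be short-circuited by the Cauchy-Schwarz inequality $\sum_e \phi_e^2 \geq (\sum_e \phi_e)^2/(n-1) = \nu^2/(n-1)$, which delivers the same bound in one line; but the argument above is the one anticipated by the auxiliary lemmas provided in the preliminaries.
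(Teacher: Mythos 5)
Your proposal is correct and follows essentially the same route as the paper's own proof: reduce via Lemma~\ref{lemma:bound} to minimizing $\sum_e \binom{\phi_e}{2}$ on the hyperplane $\sum_e \phi_e = \nu$, eliminate one variable, identify the resulting Hessian with $\bm I + \bm 1$, and invoke Lemmas~\ref{lemma:strictly_convex} and~\ref{lemma:matrix_is_strictly_convex} to certify the stationary point $\phi_e = \nu/(n-1)$ as the global minimum. Your closing remark that Cauchy--Schwarz yields the same bound in one line is a valid (and shorter) alternative, but your main argument matches the paper's.
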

\begin{proof}
Let $T=(V,E')$ be a solution of the MSTCI 
problem w.r.t. $G$, and $\hat B_T = 
\{\hat b_e\}_{e \in E'}$. 
a \emph{non-redundant bond set}. By Lemma~\ref{lemma:bound}, 
the following equations hold:

\begin{enumerate}
\item $\sum_{e \in E'} \binom{\phi_e}{2} \leq \cap(G)$
\item $\nu = \sum_{e \in E'} \phi_e$,
\end{enumerate}
where $\phi_e = |\hat b_e| - 1$ for every $e \in E'$. 
The expression $\sum_{e \in E'} 
\binom{\phi_e}{2}$ can be identified with a 
point in the image of a function $f: 
\mathbb{R}^{n-1} \rightarrow \mathbb{R}$, defined as

$$f(x_1, \dots, x_{n-1}) = \sum_{i = 1}^{n-1} \binom{x_i}{2} = \sum_{i = 1}^{n-1} \frac{x_i(x_i-1)}{2}.$$

In this setting, in order to effectively calculate a 
lower bound of the intersection number of $G$, the 
above expressions lead to the following quadratic 
optimization problem~\cite[\S 9]{murty:2010},
\begin{align*}
\text{minimize} & \quad \sum_{i = 1}^{n-1} \binom{x_i}{2}\\
\text{subject to} &\quad  \nu = \sum_{i = 1}^{n-1} x_i.
\end{align*}
This minimization problem can be solved by restricting a 
degree of freedom of the objective function, defining 
$\hat f: \mathbb{R}^{n-2} \rightarrow \mathbb{R}$ as
$$\hat f(x_1, \dots, x_{n-2}) = f(x_1, \dots, x_{n-2}, \nu - \sum_{i=1}^{n-2} x_i).$$

The next step is to show that $\hat f$ is a strictly 
convex function and consequently has a unique global 
minimum. The function $\hat f$ can be expressed in matrix 
form as follows,
$$\hat f(x_1, \dots, x_{n-2}) = \frac{1}{2} 
\left( \bm x^T 
(\bm I + \bm 1^{n-2 \times n-2}) \bm x - 2 \nu \bm{x}^T 
\bm 1^{n-2} + \nu^2 - \nu
\right),
$$
where $\bm x = (x_1, \dots, x_{n-2})^T$, $\bm I$ is the 
$(n-2) \times (n-2)$ identity matrix, $\bm 1^{n-2 \times n-2}$ 
and  $\bm 1^{n-2}$ are the $(n-2) \times (n-2)$  
matrix and the $n-2$ vector respectively, 
whose entries are all equal to 1.
By Lemma~\ref{lemma:strictly_convex}, $\hat f$ is strictly 
convex if and only if $(\bm I + \bm 1^{n-2 \times n-2})$ is positive 
definite; this fact is proved in Lemma~\ref{lemma:matrix_is_strictly_convex}. To calculate the 
unique minimum of $\hat f$ suffices to check where 
its gradient vanishes,
$$\nabla \hat f = (\bm I + \bm 1^{n-2 \times n-2}) \bm x - \nu \bm 1^{n-2} = 0.$$
The solution of this linear system is
$$x_i = \frac{\nu}{n-1}, \quad \forall \ 1 \leq i \leq n-2$$
and 
$$x_{n-1} = \nu - \sum_{i=1}^{n-2} x_i = \frac{\nu}{n-1}.$$
The global minimum is
\[
\hat f\left(\frac{\nu}{n-1}, \dots, \frac{\nu}{n-1}\right) = 
\frac{1}{2}\left(\frac{\nu^2}{n-1} - \nu\right).
\]

\noindent Summarizing, the following inequalities hold,
$$\frac{1}{2}\left(\frac{\nu^2}{n-1} - \nu\right) \leq \sum_{e \in E'} \binom{\phi_e}{2} \leq \cap(G),$$
and the global minimum of $\hat f$ is a lower 
bound of the
intersection number of $G$ as claimed. 
\end{proof}

\bigskip

We will refer to the lower bound as

$$l_{n,m} := \frac{1}{2} \left(\frac{\nu^2}{n-1} - \nu\right).$$

\subsection{Evaluation}

The next lemma focuses on the quality of $l_{n,m}$ in the sense that quantifies its underestimation in a particular family of graphs. 

\begin{lemma}\label{lemma:bound_regular_graph}
If $G=(V,E)$ is a graph that contains a universal 
vertex $u \in V$ such that the subgraph $G - \{u\}$ 
is $k$-regular with $k \geq 4$, then 
$$\frac{1}{8} \leq \frac{l_{n,m}}{\cap(G)} \leq \frac{1}{4}.$$
\end{lemma}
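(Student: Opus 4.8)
The plan is to exploit the fact that a universal vertex pins down $\cap(G)$ exactly, reducing the whole statement to an elementary computation in $k$ and $n$. First I would note that since $u$ is universal, $G$ admits a star spanning tree $T_s$ centred at $u$; by Corollary~\ref{coro:star_solves_mstci}, $T_s$ solves the MSTCI problem, so $\cap(G) = \cap_G(T_s)$ and Lemma~\ref{lemma:star_formula} applies verbatim. Each leaf $v \neq u$ has $k$ neighbours inside the $k$-regular subgraph $G - \{u\}$ plus the edge to $u$, hence $d(v) = k+1$ in $G$. Substituting into the star formula gives
$$\cap(G) = \sum_{v \in V - \{u\}} \binom{d(v)-1}{2} = (n-1)\binom{k}{2} = (n-1)\frac{k(k-1)}{2}.$$

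Next I would compute the cyclomatic number by counting edges: $G-\{u\}$ contributes $\tfrac{(n-1)k}{2}$ edges and the star contributes $n-1$ more, so $m = (n-1)\tfrac{k+2}{2}$ and therefore $\nu = m - n + 1 = \tfrac{(n-1)k}{2}$. Plugging this into the closed form $l_{n,m} = \tfrac12\!\left(\tfrac{\nu^2}{n-1}-\nu\right)$ and simplifying yields
$$l_{n,m} = \frac{(n-1)k(k-2)}{8}.$$

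The crux is then to form the ratio. Dividing the two expressions, the factor $n-1$ cancels, leaving a quantity depending only on $k$:
$$\frac{l_{n,m}}{\cap(G)} = \frac{k-2}{4(k-1)}.$$
To finish I would bound this rational function for $k \geq 4$. The upper bound is immediate, since $k-2 < k-1$ gives $\tfrac{k-2}{4(k-1)} < \tfrac{k-1}{4(k-1)} = \tfrac14$; for the lower bound, the inequality $\tfrac{k-2}{4(k-1)} \geq \tfrac18$ rearranges to $2(k-2) \geq k-1$, i.e.\ $k \geq 3$, which holds a fortiori. One may also observe that $g(k) = \tfrac{k-2}{4(k-1)}$ is monotonically increasing with $g(4) = \tfrac16$ and $g(k) \to \tfrac14$, so the stated interval is satisfied with room to spare at the lower end.

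I do not expect a genuine obstacle here: the argument is a direct computation once the star-spanning-tree machinery fixes $\cap(G)$ exactly. The only place demanding care is the degree bookkeeping—distinguishing the degree $k$ of a leaf inside $G-\{u\}$ from its degree $k+1$ in $G$—and the analogous edge count feeding into $\nu$, since an error in either propagates into the ratio and breaks the bounds.
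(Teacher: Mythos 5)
Your proof is correct and follows the same overall route as the paper: invoke the star formula (Lemma~\ref{lemma:star_formula}) via Corollary~\ref{coro:star_solves_mstci}, count edges to get $\nu$, form the ratio $l_{n,m}/\cap(G)$, and bound the resulting rational function of $k$. However, your degree bookkeeping differs from the paper's, and the difference is substantive. You read the hypothesis literally: $G-\{u\}$ is $k$-regular, hence $d_G(v)=k+1$ for $v\neq u$, giving $\cap(G)=(n-1)\binom{k}{2}$, $\nu=\tfrac{(n-1)k}{2}$, and the ratio $\tfrac{k-2}{4(k-1)}$, which for $k\geq 4$ lies in $[\tfrac16,\tfrac14)$. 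The paper instead substitutes $d(v)=k$ into the star formula, i.e.\ it implicitly treats $k$ as the degree in $G$ itself (so that $G-\{u\}$ would really be $(k-1)$-regular), and obtains $\cap(G)=(n-1)\binom{k-1}{2}$, $\nu=\tfrac{(n-1)(k-1)}{2}$, and the ratio $\tfrac{k-3}{4(k-2)}$, which equals $\tfrac18$ exactly at $k=4$ and tends to $\tfrac14$ as $k\to\infty$. So under your (statement-consistent) reading the claimed inequalities still hold, as you verify, but the constant $\tfrac18$ is never attained and the true infimum is $\tfrac16$; under the paper's reading both endpoints are met, which is evidently what the authors intended when they remark that the bounds are achieved at $k=4$ and in the limit. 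In short: your proof is valid for the lemma as stated, and the spot you flagged as demanding care — distinguishing the degree inside $G-\{u\}$ from the degree in $G$ — is precisely where the paper's statement and its own proof are inconsistent with each other.
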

\begin{proof}
As $d(v) = k$ $\forall v \in 
V-\{u\}$, and the star spanning tree is a solution 
of the MSTCI problem then according to the 
intersection number formula,
$$\cap(G) = \sum_{v \in V-\{u\}} \binom{d(u) - 1}{2} =
\sum_{v \in V-\{u\}} \binom{k - 1}{2} = \frac{(n-1) (k-1)(k-2)}{2}.$$
Taking into account that,
$$m = \frac{1}{2} \sum_{v \in V} d(v) = 
\frac{1}{2}\left((n-1) + (n-1) k\right) = \frac{(n-1)(k+1)}{2}.$$
Then,
$$\nu = m - (n - 1) = \frac{(n-1)(k+1)}{2} - (n - 1) 
= \frac{(n-1)(k-1)}{2}.$$

\noindent Next, we analyze $l_{n,m}$,
$$l_{n,m} = \frac{1}{2} \left(\frac{\nu^2}{n-1} - \nu\right) = 
\frac{(n-1)(k-1)(k-3)}{8},$$
and finally we express the quotient,
$$\frac{\frac{1}{2} (\frac{\nu^2}{n-1} - 
\nu)}{\cap(G)} = \frac{k-3}{4(k-2)}.$$

This implies that when $k=4$ and 
$k \rightarrow \infty$, the lower and upper bounds 
are met.
\end{proof}
\bigskip

This last result expresses two facts: 1) the 
underestimation of $l_{n,m}$ can be considerable and 
2) $l_{n,m}$ seems to perform 
better in dense graphs.

\section{A conjectural tight lower bound}

In this part, we conjecture an improved lower bound based on experimental results. First we focus on graphs 
$G=(V,E)$ with a universal vertex $u \in V$. Recall from 
Theorem~\ref{teo:universal_vertex} that the star spanning 
tree $T_s$ is a solution of the MSTCI problem. 
Note that the sum of the degrees of the 
subgraph $G' = G - \{u\}$ verifies 

$$\sum_{v \in G'} d(v) = 2 \nu.$$

 The following equivalent definitions refer to those graphs with 
 a universal vertex that minimizes the intersection number 
 for a fixed number of vertices and edges. The interpretation 
 is simply the equidistribution of the total degree among 
 the remaining $n-1$ vertices.

\begin{definition}
	Let $G=(V,E)$ be a graph with a universal 
	vertex $u \in V$ and $2 \nu = q (n-1) + r$ the 
	integer division of $2 \nu$ and $n-1$. We shall say 
	that $G$ is \emph{$\nu$-regular} if the degree 
	of every node $v \in V - \{u\}$ is $q+1$  or $q+2$.
\end{definition} 

\begin{definition}
	Let $G=(V,E)$ be a graph with a universal 
	vertex $u \in V$ and $2 \nu = q (n-1) + r$ the 
	integer division of $2 \nu$ and $n-1$. We shall say 
	that $G$ is \emph{$\nu$-regular}  if it has exactly $n-1-r$ nodes with degree $q+1$ and 
	$r$ nodes with degree $q+2$.
\end{definition} 

The equivalence is based on 
the uniqueness of $q$ and $r$. Note that for every $|V| = n$ and $|E| = m$ there are $\nu$-regular graphs, the proof can be expressed by induction on 
$\nu$.

\bigskip

The following lemma shows that 
$\nu$-regular graphs minimize the intersection 
number of graphs with a universal vertex.

\begin{lemma}\label{lemma:mu_equidistributed}
	If $G=(V_G,E_G)$ is a $\nu-regular$ graph and 
	$H=(V_H, E_H)$ is a non $\nu-regular$ graph  
	with a universal vertex such that $|V_G| = |V_H| = n$ and 
	$|E_G| = |E_H| = m$, then 

$$\cap(G) < \cap(H).$$
\end{lemma}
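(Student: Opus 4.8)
The plan is to reduce both sides to the star-tree formula and then to recognize the resulting minimization as a discrete convexity (majorization) problem. First I would observe that $G$ and $H$ share the same $n$ and $m$, hence the same cyclomatic number $\nu = m - n + 1$. Since both graphs possess a universal vertex, Corollary~\ref{coro:star_solves_mstci} guarantees the star spanning tree is optimal, and Lemma~\ref{lemma:star_formula} yields the closed forms
$$\cap(G) = \sum_{v \in V_G - \{u_G\}} \binom{d(v) - 1}{2}, \qquad \cap(H) = \sum_{v \in V_H - \{u_H\}} \binom{d(v) - 1}{2},$$
where $u_G, u_H$ denote the respective universal vertices. Writing $a_v := d(v) - 1$ (the degree of $v$ after deleting the universal vertex), a degree count gives $\sum_v a_v = 2m - 2(n-1) = 2\nu$ for \emph{both} graphs. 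Thus the statement reduces to the following purely combinatorial claim: among all nonnegative integer sequences of length $n-1$ summing to $2\nu$, the unique minimizer of $\sum_i \binom{a_i}{2}$ is the balanced sequence, and this balanced sequence is precisely the degree profile of a $\nu$-regular graph.

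Next I would establish this minimality by an exchange (smoothing) step that exploits the strict convexity of $t \mapsto \binom{t}{2}$; note that this is the discrete counterpart of Theorem~\ref{teo:main}, where the continuous relaxation placed all mass at $\nu/(n-1)$, generally non-integral. Concretely, if a sequence contains two entries with $a_j \geq a_k + 2$, then replacing $(a_j, a_k)$ by $(a_j - 1, a_k + 1)$ preserves the sum and changes the objective by
$$\left[\binom{a_j-1}{2} - \binom{a_j}{2}\right] + \left[\binom{a_k+1}{2} - \binom{a_k}{2}\right] = -(a_j - 1) + a_k = a_k - a_j + 1 \leq -1,$$
a strict decrease. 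Iterating drives any sequence toward one whose entries differ by at most $1$, which by the uniqueness of the quotient $q$ and remainder $r$ in $2\nu = q(n-1) + r$ is exactly the $\nu$-regular profile (after the $-1$ shift, degrees $q+1$ or $q+2$).

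Finally, since $H$ is \emph{not} $\nu$-regular, its degree sequence is not balanced, so it must contain at least one pair of entries differing by $2$ or more; a single application of the exchange step strictly lowers its objective below that of the balanced sequence, whence $\cap(G) < \cap(H)$. I expect the main obstacle to be the bookkeeping that ties ``non-$\nu$-regular'' precisely to ``the degree sequence is not the balanced one,'' together with confirming that the exchange yields a \emph{strict} inequality — this rests on the strict convexity of the binomial and on $H$ genuinely deviating from the balanced profile. The degree shift between $G$ and $G - \{u\}$ also needs care, so that the constraint $\sum_v a_v = 2\nu$ is imposed on the correct quantities.
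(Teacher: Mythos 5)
Your proof is correct, and it rests on the same key computation as the paper's own proof --- the strict decrease of $\binom{a_j}{2}+\binom{a_k}{2}$ under the smoothing $(a_j,a_k)\mapsto(a_j-1,a_k+1)$ when $a_j\geq a_k+2$ --- but the surrounding architecture is genuinely different. The paper argues by contradiction: it assumes $\cap(H)\leq\cap(G)$, takes $H$ (without loss of generality) to be a minimizer of the intersection number among universal-vertex graphs with the given $n$ and $m$, and performs an actual edge relocation inside the graph $H$, replacing an edge $(v_{max},w)$ by $(v_{min},w)$ where $w$ is a neighbor of $v_{max}$ but not of $v_{min}$; optimality of $H$ then forces $d_H(v_{max})-d_H(v_{min})\leq 1$, contradicting non-$\nu$-regularity. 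That route needs two graph-theoretic facts that your route never touches: that such a vertex $w$ exists (true because $d_H(v_{max})\geq d_H(v_{min})+2$, though the paper uses it without justification) and that the modified graph $H'$ still has a universal vertex. Your argument abstracts the graphs away: after the star formula, both $\cap(G)$ and $\cap(H)$ are values of the symmetric convex functional $\sum_i\binom{a_i}{2}$ on nonnegative integer sequences of length $n-1$ summing to $2\nu$, and iterated smoothing (which terminates, since the objective is a strictly decreasing nonnegative integer) shows the balanced sequence --- exactly the $\nu$-regular profile --- is the strict minimizer among all such sequences. This is direct rather than by contradiction, needs no extremal choice of $H$, and never has to check that intermediate configurations are realizable as graphs, since only the endpoints $G$ and $H$ must exist. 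One wording slip to repair: a single smoothing step lowers $H$'s objective below its \emph{original} value, not ``below that of the balanced sequence''; the correct chain is $\cap(G)=(\text{balanced value})\leq(\text{value after one step})<\cap(H)$, where the first inequality is exactly your iteration claim that the balanced sequence is the global minimum.
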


\begin{proof}
Suppose on the contrary that $\cap(H) \leq \cap(G)$. 
Let $u_H$ be a universal vertex of $H$. And -without loss of 
generality- let $H$ have a minimum intersection number; more 
precisely let $J = (V_{J}, E_{J})$ be a graph with a 
universal vertex such that $|V_{J}| = n$ and $|E_{J}| = m$. Then the following holds,

$$\cap(H) \leq \cap(J).$$

By definition as $H$ is not $\nu$-regular then a 
node with maximum degree and a node with minimum degree 
$v_{max}, v_{min} \in V_H-\{u\}$ satisfy that 
$d_H(v_{max}) - d_H(v_{min}) >= 2$. Let 
$(v_{max},w) \in E_H$ be an edge such that $w \in V_H$ is 
not a neighbor of $v_{min}$. Consider the graph 
$H' = (V_H, E_H - \{(v_{max},w)\} \cup \{(v_{min},w)\})$. 
According to the hypothesis on $H$ and the formula presented
in Lemma~\ref{lemma:star_formula} the following holds,

$$\cap(H) = \sum_{v \in V_H-\{u\}} \binom{d_H(v) - 1}{2} \leq \sum_{v \in V_H-\{u\}} \binom{d_{H'}(v) - 1}{2} = \cap(H').$$

As the degrees of all the nodes except for $v_{max}$ and 
$v_{min}$ coincide in $H$ and $H'$, the above inequality  
implies that,

$$\binom{d_H(v_{max}) - 1}{2} + \binom{d_H(v_{min}) - 1}{2} \leq 
\binom{d_{H'}(v_{max}) - 1}{2} + \binom{d_{H'}(v_{min}) - 1}{2}.$$

Let $d_H(v_{min}) - 1 = k$ and $d_H(v_{max}) - 1 = k+t$, 
rewriting the inequality we have,

$$\binom{k+t}{2} + \binom{k}{2} \leq 
\binom{k + t - 1}{2} + \binom{k + 1}{2}.$$

Expanding the binomials,

$$\frac{1}{2} [(k+t) (k+t-1) + k (k-1)] \leq 
\frac{1}{2} [(k+t-1) (k+t-2) + (k+1) k],$$

which implies,

$$(k+t-1) - k \leq 0 \implies t \leq 1 \implies d_H(v_{max}) - d_H(v_{min}) \leq 1.$$

This contradicts the hypothesis on $H$ and consequently proves the lemma.
\end{proof}

Considering the general case of connected graphs 
-in an attempt to improve $l_{n,m}$-, we decided to 
analyze those graphs that minimize the intersection 
number for each fixed $n$ and $m$. The experiment 
consisted of exhaustively checking the set of $7$ 
and $8$-node connected graphs. The interesting 
result was that in this general setting, 
$\nu$-regular graphs also minimize the 
intersection number; 
although not exclusively: in some cases there are 
other graphs -without a universal vertex- that 
achieve the minimum intersection number. We 
validated this fact by considering a sample of 1000 
randomly generated (by a uniform distribution)  
$9$-node connected graphs. This evaluation resulted 
positive in all cases, and consequently enables 
to formulate the following conjecture on a firm basis,

\begin{conjecture}\label{conj:mu_equidistributed}
	Let $G$ be a $\nu-regular$ graph and $H$ be a connected graph such that 
	$|V(G)| = |V(H)|$ and $|E(G)| = |E(H)|$. Then 

$$\cap(G) \leq \cap(H).$$
\end{conjecture}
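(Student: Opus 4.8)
The plan is to establish the equivalent quantitative bound
\[
\cap(H)\ \ge\ (n-1)\binom{q}{2}+q\,r\qquad\text{for every connected }H,
\]
and to note that the right-hand side is exactly $\cap(G)$ for a $\nu$-regular $G$: by Lemma~\ref{lemma:star_formula} such a $G$ has $n-1-r$ nodes of degree $q+1$ and $r$ of degree $q+2$, so $\cap(G)=(n-1-r)\binom{q}{2}+r\binom{q+1}{2}=(n-1)\binom{q}{2}+q\,r$ using $\binom{q+1}{2}-\binom{q}{2}=q$. Thus the conjecture reduces to a lower bound valid for an arbitrary spanning tree $T$ of $H$, where (writing $P_f=aTb$ for a cycle-edge $f=(a,b)$) two tree-cycles $C_f,C_g$ intersect precisely when $P_f\cap P_g\neq\varnothing$, since the only edges they can share are tree-edges.

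The bond argument behind Lemma~\ref{lemma:bound} charges each cycle-edge to a single bond, which forces $\sum_{e\in E'}\phi_e=\nu$ and, after the continuous relaxation of Theorem~\ref{teo:main}, yields only $l_{n,m}$ — off by a factor approaching $4$ according to Lemma~\ref{lemma:bound_regular_graph}. The idea is to recover that factor by charging each cycle-edge to \emph{both} of its end-edges. For a cycle-edge $f$ with $|P_f|\ge 2$, call its two \emph{end-edges} the tree-edges of $P_f$ incident to the endpoints of $f$; these are distinct. Setting $\psi_e=\bigl|\{\,f:\ e\text{ is an end-edge of }f\,\}\bigr|$ for each $e\in E'$, every cycle-edge contributes to exactly two such counts, so $\sum_{e\in E'}\psi_e=2\nu$ — this is the source of the missing factor. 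Moreover all cycle-edges sharing a common end-edge $e$ contain $e$, hence pairwise intersect, so the buckets exhibit $\sum_{e\in E'}\binom{\psi_e}{2}$ intersecting pairs.

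First I would prove the key non-redundancy claim: two distinct cycle-edges have at most one common end-edge. This rests on the uniqueness of paths in a tree — a tree-path is determined by its (unordered) pair of end-edges, because those fix its two endpoints and the path between them is unique — so two distinct cycle-edges (distinct endpoint pairs, as $H$ is simple) cannot share both end-edges. Granting this, no intersecting pair is charged in two different buckets, and therefore $\sum_{e\in E'}\binom{\psi_e}{2}\le\cap_H(T)$. It then remains to minimize $\sum_{i=1}^{n-1}\binom{x_i}{2}$ over nonnegative \emph{integers} with $\sum_i x_i=2\nu$; unlike in Theorem~\ref{teo:main}, integrality is essential here. The exchange argument used in Lemma~\ref{lemma:mu_equidistributed} — moving a unit from a largest to a smallest coordinate strictly decreases the objective whenever the spread exceeds one — shows the minimum is attained at the equidistributed point and equals $(n-1)\binom{q}{2}+q\,r$. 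Chaining the inequalities gives $\cap_H(T)\ge(n-1)\binom{q}{2}+q\,r$ for every spanning tree, in particular for an MSTCI solution, which is the conjecture.

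The hard part will be the non-redundancy claim, since the entire improvement from $\nu$ to $2\nu$ — and hence from $l_{n,m}$ to the tight bound — depends on the end-edge charging being duplication-free. I would handle with care the configurations arising in the uniqueness argument: end-edges that are adjacent (a length-two path), and the a priori ambiguity of which endpoint of a shared end-edge is the endpoint of the path; I would also record the harmless case in which an end-edge of one cycle-edge lies in the interior of another's path, where the pair still intersects but is simply left uncounted. Verifying that each counted intersecting pair corresponds to a unique tree-edge is exactly what upgrades the bond method and, if it goes through as expected, closes the factor-$4$ gap to the conjectured value.
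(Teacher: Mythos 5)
The paper does not actually prove this statement: it is offered as a conjecture, supported only by exhaustive enumeration of all $7$- and $8$-node connected graphs and a uniform random sample of $9$-node graphs. Your proposal is therefore not a variant of the paper's argument but a proof where the paper has none, and after checking it I believe it is sound. You also reverse the paper's logical order: instead of deriving the bound of Corollary~\ref{conj:lower_bound} from the conjecture, you prove that bound directly for every spanning tree $T$ of an arbitrary connected $H$, and recover the conjecture because a $\nu$-regular $G$ attains it exactly (by Theorem~\ref{teo:universal_vertex} and Lemma~\ref{lemma:star_formula}). The improvement over the paper's Theorem~\ref{teo:main} comes from three changes: charging each cycle-edge to the \emph{two} end-edges of its tree path, which replaces the constraint $\sum_e\phi_e=\nu$ of Lemma~\ref{lemma:bound} by $\sum_e\psi_e=2\nu$; the non-redundancy claim, which guarantees $\sum_e\binom{\psi_e}{2}\le\cap_H(T)$ with no pair counted twice; and integer rather than continuous minimization, which turns the optimum from $l_{n,m}$ into exactly $(n-1)\binom{q}{2}+q\,r$. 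The step you rightly flag as critical --- two distinct cycle-edges share at most one end-edge --- does hold, and can be written cleanly as follows: any path in $T$ whose first and last edges are $e_1,e_2$ contains the minimal subtree $Q$ of $T$ containing $e_1\cup e_2$, and a connected subgraph of a path that contains both its first and its last edge must be the whole path; hence every path with end-edge pair $\{e_1,e_2\}$ equals $Q$, so equal end-edge pairs force equal paths, hence equal endpoint pairs, hence equal cycle-edges in a simple graph. This disposes of both configurations you worry about (adjacent end-edges giving a length-two path, and the ambiguity of which endpoint of a shared end-edge ends the path), and the case where an end-edge of one path lies in the interior of another is indeed harmless undercounting. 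As for what each approach buys: the paper's empirical evidence certifies nothing beyond $9$ vertices, whereas your argument --- once the non-redundancy lemma and the exchange argument (the same computation as in Lemma~\ref{lemma:mu_equidistributed}) are written out --- settles Conjecture~\ref{conj:mu_equidistributed} affirmatively for all connected graphs, makes Corollary~\ref{conj:lower_bound} an unconditional theorem, and explains structurally where most of the gap quantified in Lemma~\ref{lemma:bound_regular_graph} comes from: in the paper's bond method each cycle-edge is charged to one tree-edge instead of two.
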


Based on the intersection number formula of Lemma~\ref{lemma:star_formula}, this conjecture implies the 
following improved -and tight- lower bound of the 
intersection number.

\begin{corollary}\label{conj:lower_bound}
	If $G=(V,E)$ is a connected graph and 
	$2 \nu = q (n-1) + r$ the integer division of $2 \nu$ 
	and $n-1$, then 

$$\hat l_{n,m} := (n-1-r) \binom{q}{2} + r \binom{q+1}{2} = (n-1) \binom{q}{2} + q \ r\leq \cap(G).$$
\end{corollary}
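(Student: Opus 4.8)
The plan is to read this statement as a direct consequence of Conjecture~\ref{conj:mu_equidistributed} combined with the closed-form intersection number of Lemma~\ref{lemma:star_formula}. Since the corollary is explicitly conditional on the conjecture, no genuine obstacle arises in the argument itself: all the difficulty is concentrated in the (unproved) conjecture, and what remains is a single evaluation of $\cap$ on a $\nu$-regular witness graph followed by verification of the algebraic identity stated in the claim. Accordingly I would structure the proof as (i) exhibit a $\nu$-regular graph with the prescribed parameters, (ii) compute its intersection number via the star formula, (iii) transfer the value to an arbitrary $G$ through the conjecture, and (iv) reconcile the two displayed expressions for $\hat l_{n,m}$.

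First I would fix an arbitrary connected graph $G$ with $|V| = n$ and $|E| = m$, and invoke the remark preceding the conjecture that for these same parameters there always exists a $\nu$-regular graph; call it $R$. As $R$ has a universal vertex $u$, Lemma~\ref{lemma:star_formula} applies and gives
$$\cap(R) = \sum_{v \in V(R) - \{u\}} \binom{d(v) - 1}{2}.$$
By the defining property of $\nu$-regularity, exactly $n-1-r$ of these vertices have degree $q+1$ and exactly $r$ have degree $q+2$, so the former each contribute $\binom{q}{2}$ and the latter each contribute $\binom{q+1}{2}$, whence
$$\cap(R) = (n-1-r)\binom{q}{2} + r\binom{q+1}{2} = \hat l_{n,m}.$$
I would then apply Conjecture~\ref{conj:mu_equidistributed} with $R$ as the $\nu$-regular graph and $G$ as the arbitrary connected graph; since they share $n$ and $m$, it yields $\cap(R) \leq \cap(G)$, and chaining with the preceding identity gives $\hat l_{n,m} = \cap(R) \leq \cap(G)$, exactly the claimed bound.

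It remains only to verify the equality $(n-1-r)\binom{q}{2} + r\binom{q+1}{2} = (n-1)\binom{q}{2} + q\,r$ recorded in the statement, which is a one-line computation using $n-1-r = (n-1) - r$ together with the Pascal-type identity $\binom{q+1}{2} - \binom{q}{2} = q$. The only subtlety I anticipate is bookkeeping: one must take the witness $R$ with the same $n$ and $m$ as $G$ so that the conjecture applies and so that the integer-division data $(q,r)$ coincide. This is automatic, since $2\nu = 2(m-n+1)$ depends only on $n$ and $m$, hence so do $q$ and $r$; thus every $\nu$-regular graph on these parameters realizes the same value $\hat l_{n,m}$, and the choice of witness is immaterial.
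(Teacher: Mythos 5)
Your proposal is correct and follows exactly the paper's intended derivation: the corollary is stated there as an immediate consequence of Conjecture~\ref{conj:mu_equidistributed} together with the star formula of Lemma~\ref{lemma:star_formula}, evaluated on a $\nu$-regular witness whose non-universal vertices have degrees $q+1$ and $q+2$ in the prescribed multiplicities. Your added bookkeeping (existence of the witness via the paper's remark, the observation that $q,r$ depend only on $n$ and $m$, and the identity $\binom{q+1}{2}-\binom{q}{2}=q$) just makes explicit what the paper leaves implicit.
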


\subsection{Comparison between $l_{n,m}$ and $\hat l_{n,m}$}













\begin{figure}[h]
	\centering
	\includegraphics[scale = 0.5]{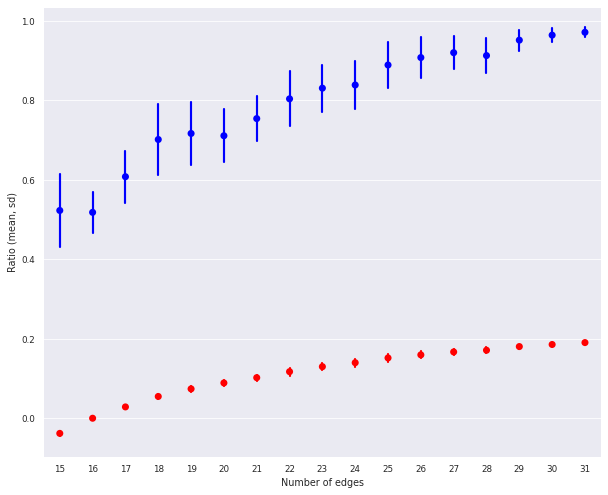}
	\caption{Comparison of mean and standard deviation of 
	$\frac{\hat l_{n,m}}{\cap(G)}$ (blue) and $\frac{l_{n,m}}{\cap(G)}$
	(red) over a uniformly distributed random sample of size 1000 of 
	$9$-node connected graphs.
  \label{fig:lower_bound_comparison}}
\end{figure}

Figure~\ref{fig:lower_bound_comparison} compares $\hat l_{n,m}$ and $l_{n,m}$ considering them as proportions of $\cap(G)$ on a random sample of 9-node connected graphs. This graphic suggests the following:

\begin{itemize}
	\item The performance improvement of $\hat l_{n,m}$ is much better than the one expressed in Lemma~\ref{lemma:bound_comparison}. 
	\item As the number of edges increases, $\hat l_{n,m}$ turns out be a tighter lower bound of $\cap(G)$.
\end{itemize}



\section{Conclusion} 

This article considers the Minimum Spanning Tree Cycle Intersection problem of arbitrary connected graphs. In this general setting, we focused on two lower bounds of the intersection number. The proof of the first lower bound ($l_{n,m}$) suggests two simple structural characteristics of the problem, a solution to it should be the ``best possible combination'' of the following conditions,

\begin{itemize}
	\item Tree-cycles should have short 
	length so that each cycle-edge belongs to 
	the least number of bonds.
	\item Cycle-edges should be 
	equidistributed among bonds.
\end{itemize}

The first condition resembles the Minimum Fundamental Cycle Basis problem~\cite{Kavitha:2009}, and the Low-stretch Spanning Tree problem~\cite{Alon1995,doi:10.1137/050641661}, whereas the second resembles some form of average-cut. These similarities could be useful in order to design algorithms. 

\bigskip

The second tight lower bound ($\hat l_{n,n}$) is presented in a conjectural form (Conjecture~\ref{conj:mu_equidistributed}). It is based on so-called $\nu$-regular graphs. These graphs verify the previous conditions and consequently reinforce their importance. This last bound shows that  graphs with a universal vertex play a fundamental role, both 
because they are well understood cases and they provide examples of minimum intersection number as a function of the number of nodes and edges. The bound can be expressed in a very general context as follows.

\medskip

\emph{Let G=(V,E) be a connected graph and T a spanning tree of G. Then 
}

$$(n-1) \binom{q}{2} + q \ r \leq \cap_G(T),$$

\noindent \emph{where $n = |V|$, $\nu$ is the cyclomatic number of G and $2 \nu = q (n-1) + r$ is the integer division of $2 \nu$ and $n-1$.}

\medskip

As mentioned in the introduction, a solution of the MSTCI problem can be useful for integrating discrete 1-forms on graphs. More specifically, fast linear solvers can be applied to sparse linear systems that involve cycle intersection matrices. We believe that this conjectured lower bound implies a good approximation of the sparsity of those matrices for arbitrary connected graphs. And this can help to decide if fast methods could be applied to linear systems of large graphs.

\bibliographystyle{amsplain}
\bibliography{mstci}

\end{document}